\setlist[itemize]{parsep=1pt, topsep=1pt}
\theoremstyle{plain}
\newtheorem{theorem}{Theorem}[section]
\newtheorem*{theorem*}{Theorem}
\newtheorem{proposition}[theorem]{Proposition}
\newtheorem*{proposition*}{Proposition}
\newtheorem{corollary}[theorem]{Corollary}
\newtheorem*{corollary*}{Corollary}
\newtheorem*{lemma*}{Lemma}
\theoremstyle{definition}
\newtheorem*{example*}{Example}
\newcommand\FF{\mathbb{F}}
\newcommand\calA{\mathcal{A}}
\newcommand\calC{\mathcal{C}}
\newcommand\calI{\mathcal{I}}
\newcommand\calJ{\mathcal{J}}
\newcommand\calO{\mathcal{O}}
\newcommand\calU{\mathcal{U}}
\newcommand\bfa{{\bm{a}}}
\newcommand\bfc{{\bm{c}}}
\newcommand\bfx{{\bm{x}}}
\newcommand\bfy{{\bm{y}}}
\newcommand\bfzero{{\bm{0}}}
\newcommand\bfA{{\bm{A}}}
\newcommand\bfD{{\bm{D}}}
\newcommand\bfE{{\bm{E}}}
\newcommand\bfM{{\bm{M}}}
\newcommand\bfZ{{\bm{Z}}}
\newcommand\bfQ{{\bm{Q}}}
\newcommand\bfX{{\bm{X}}}
\newcommand\bfY{{\bm{Y}}}
\DeclareMathOperator{\rk}{rk}
\renewcommand\epsilon{\varepsilon}
\newcommand\mydef{\coloneqq}
\newcommand\qbin[3]{
  \left[\begin{smallmatrix} #1 \\[0.3em] #2 \end{smallmatrix}\right]_{#3}
}
\newcommand\bigqbin[3]{
  \left[\begin{matrix} \,#1\, \\ \,#2\, \end{matrix}\right]_{#3}
}
\patchcmd{\maketitle}{\@fnsymbol}{\@alph}{}{}  
\title{On the privacy of a code-based single-server\\
  computational PIR scheme}
\author{
    Sarah Bordage\thanks{LIX, CNRS UMR 7161, Ecole Polytechnique, Institut Polytechnique de Paris \& Inria, 91120 Palaiseau, France. \texttt{sarah.bordage@lix.polytechnique.fr}}
    \and
    Julien Lavauzelle\thanks{Univ. Rennes, CNRS, IRMAR -- UMR 6625, F-35000 Rennes, France. \texttt{julien.lavauzelle@univ-rennes1.fr}}
}
\date{}
\date{\today}
\begin{document}

\maketitle

\begin{abstract}
  We show that the single-server computational PIR protocol proposed by Holzbaur, Hollanti and Wachter-Zeh in~\cite{HolzbaurHW20} is not private, in the sense that the server can recover in polynomial time the index of the desired file with very high probability. The attack relies on the following observation. Removing rows of the query matrix corresponding to the desired file yields a large decrease of the dimension over $\mathbb{F}_q$ of the vector space spanned by the rows of this punctured matrix. Such a dimension loss only shows up with negligible probability when rows unrelated to the requested file are deleted.
\end{abstract}

\section{Introduction}

Private information retrieval (PIR) enables a user to retrieve an entry of a database without revealing to the storage system the identity of the requested entry. Two security models have been introduced for PIR schemes. First, the seminal work of Chor \emph{et al.}~\cite{ChorGKS95} proposes information-theoretical security, in the sense that absolutely no information leaks about the identity of the desired item. A trivial solution, commonly referred as the trivial PIR scheme, is to require the storage system to send the whole database to the user. As a matter of fact, the authors of~\cite{ChorGKS95} also proved that, in the single-server information theoretic setting, one cannot expect to achieve communication complexity better than the trivial solution. The second security model circumvents this limit and allows the more practical use of a single server by relaxing the privacy requirement. In this model, the storage system is assumed to be computationally bounded: informally, recovering the identity of the desired item must require an attacker to invest unreachable computational effort. So-called computationally private information retrieval (cPIR) was firstly introduced in~\cite{ChorG97, KushilevitzO97}, and subsequent constructions~\cite{CachinMS99, YiKPB13, GentryR05, KiayiasLLPT15, LipmaaP17}  were then proposed. Aguilar \emph{et al.\ }proved the potential practicality of cPIR~\cite{AguilarBFK16}, but the question of building efficient cPIR protocols remains widely open. Indeed, the computational complexity of existing cPIR schemes is the most important barriers to implementation.

In this paper we focus on the recent single-server cPIR protocol proposed by Holzbaur, Hollanti and Wachter-Zeh in~\cite{HolzbaurHW20}, which relies on computational assumptions in coding theory. We prove that this scheme is not private: we present an algorithm which recovers the identity of the file in polynomial time and with very high probability, when given as input the query produced by the user. We implemented our attack, which runs in a few minutes on a standard laptop. The attack requires the number of files stored in the database to be large enough, namely lower-bounded by some function of the scheme parameters. It turns out that this condition is fulfilled for meaningful parameters of the scheme. Indeed, we show that if this lower bound is not satisfied, then the communication complexity of the cPIR scheme gets very close to the one of the trivial PIR protocol.

The paper is organized as follows. In Section~\ref{sec:description} we describe the scheme proposed in~\cite{HolzbaurHW20}. The attack is presented and proved in Section~\ref{sec:attack} and followed by a short discussion.

\section{Description of the cPIR scheme proposed in \cite{HolzbaurHW20}}
\label{sec:description}

In this section, we briefly describe the PIR scheme proposed in~\cite{HolzbaurHW20}.

\subsection{Notation and definitions}

Let us denote $[a, b] \mydef \{ a, a+1, \dots, b\}$ and $\FF_q$ the finite field with $q$ elements. The extension field $\FF_{q^s}$ is also a vector space of dimension $s$ over $\FF_q$. If $\Gamma = \{ \gamma_1, \dots, \gamma_v \} \subset \FF_{q^s}$ is a family of linearly independent vectors over $\FF_q$, then we denote $\langle \gamma_1, \dots, \gamma_v \rangle_{\FF_q} \subseteq \FF_{q^s}$ the vector space of dimension $v$ over $\FF_q$ which is generated by the elements in $\Gamma$. We also define $\psi_\Gamma : \FF_{q^s} \to \langle \gamma_1, \dots, \gamma_v \rangle_{\FF_q}$ the corresponding projection map.

For a vector $\bfx = (x_1,\dots, x_t) \in \FF_{q^s}^t$ and an ordered subset $\calJ \subset [1,n]$ of size $t$,  we denote $\phi_\calJ(\bfx) \in \FF_q^n$ the extension of the vector $\bfx$ with zeroes at indices $j \notin \calJ$. For instance, if $n = 5$ and $\calJ = \{1,4\}$, then $\phi_{\{1,4\}}((x_1,x_2)) = (x_1, 0, 0, x_2, 0)$. This map is extended to matrices by applying $\phi_\calJ$ row-wise. Conversely, if $\bfx = (x_1,\dots, x_n) \in \FF_{q^s}^n$, the punctured vector $\bfx_\calJ$ is  $\bfx_\calJ \mydef (x_{j_1}, \dots, x_{j_t}) \in \FF_{q^s}^t$. For a subset $\calA \subset \FF_{q^s}^n$, one writes $\calA_\calJ \mydef \{ \bfa_\calJ \mid \bfa \in \calA \}$.

Given a linear code $\calC \subseteq \FF_{q^s}^n$ of dimension $k$, an information set  for $\calC$ is a subset $\calI \subset [1,n]$ of size $k$ such that $\calC_{\calI} = \FF_{q^s}^k$. Finally, given a matrix $\bfM \in \FF_{q^s}^{r \times n}$, we define the rank over $\FF_q$ of $\bfM$, denoted $\rk_{\FF_q}(\bfM)$, as the dimension over $\FF_q$ of the vector space generated by the rows of $\bfM$. Notice that $\rk_{\FF_q}(\bfM) \le \min \{ns, r\}$.

\subsection{System model}

In~\cite{HolzbaurHW20}, it is assumed that a single server stores $m$ large files $\bfX^1, \dots, \bfX^m$ of the same size. In particular, for each $i \in [1,m]$ the symbols of the $i$-th file are arranged in a matrix $\bfX^i \in \FF_q^{L \times (s-v)(n-k)}$, for some $L \ge 1$. For convenience, we denote $\delta \mydef (s-v)(n-k)$. Notice that integers $m$, $s$, $v$, $n$, $k$, $q$, $L$ are known to both the user and the server.

\subsection{Queries}

We here assume that the user wants to retrieve a specific file $\bfX^i$, for a given $i \in [1,m]$. In order to generate a corresponding query $\bfQ^i$, the user samples uniformly at random:
\begin{itemize}[label=--]
  \item a code $\calC \subseteq \FF_{q^s}^n$ of dimension $k$,
  \item a information set $\calI \subset [1,n]$ for $\calC$,
  \item a basis $\{\gamma_1, \dots, \gamma_s\}$ of $\FF_{q^s}$ over $\FF_q$, and sets $V \mydef \langle \gamma_1, \dots, \gamma_v \rangle_{\FF_q}$ and $W \mydef \langle \gamma_{v+1}, \dots, \gamma_s \rangle_{\FF_q}$,
  \item a matrix $\bfD \in \FF_{q^s}^{m \delta \times n}$ such that each row of $\bfD$ is a codeword in $\calC$,
  \item a matrix $\bfE \in V^{m \delta \times n}$ such that the $j$-th column of  $\bfE$ is zero if $j \in \calI$, and lies in $V^{m \delta}$ otherwise,
  \item a matrix $\bfZ^i \in W^{m \delta \times n}$ such that the submatrix
    \[
    \bfZ^i_{[ i\delta +1, (i+1)\delta] \times \overline{\calI}} \in W^{\delta \times (n-k)}
    \]
    has rank $\delta$ over $\FF_q$, and such that all remaining entries of $\bfZ$ are zeroes.
\end{itemize}

Eventually, the user sends $\bfQ^i \mydef \bfD + \bfE + \bfZ^i  \in \FF_{q^s}^{m \delta \times n}$ to the server as a query. See Figure~\ref{fig:query} for an illustration.

\begin{figure}[t]
  \centering
  \begin{tikzpicture}[scale=0.85]
    \draw[fill=black!8] (0,0) rectangle (3,8) node[midway] {$\bfD$};
    \draw (4,0) rectangle (7,8) node[midway] {$\bfE$};
    \draw (8,0) rectangle (11,8) node[midway] {$\bfZ$};

    \draw[dashed] (0,6.4) rectangle (3,6.8) node[midway] {\scriptsize $\bfc \in \calC$};
    
    \draw[pattern=north west lines, pattern color=blue] (4,0) rectangle (4.9,8);
    \draw[pattern=north west lines, pattern color=blue] (5.8,0) rectangle (6.1,8);
    \draw[pattern=north west lines, pattern color=blue] (6.7,0) rectangle (7,8);

    \draw[pattern=crosshatch dots, pattern color=red] (8,1.5) rectangle (8.9,3);
    \draw[pattern=crosshatch dots, pattern color=red] (9.8,1.5) rectangle (10.1,3);
    \draw[pattern=crosshatch dots, pattern color=red] (10.7,1.5) rectangle (11,3);

    \node[left] at (0,4) {$\bfQ^i \;=$};
    \node at (3.5,4) {$+$};
    \node at (7.5,4) {$+$};

    \draw [decorate, decoration={brace, raise=3pt}] (0,8) -- (3,8) node [midway, above=5pt] {\footnotesize $n$};
    
    \draw [decorate, decoration={brace, mirror, raise=3pt}] (11,0) -- (11,8) node [midway, rotate=270, yshift=12pt] {\footnotesize $m\delta = m(n-k)(s-v)$};

    \draw [decorate, decoration={brace, raise=3pt}] (8,1.5) -- (8,3) node [midway, rotate=90, yshift=12pt] {\scriptsize $[i\delta+1, (i+1)\delta]$};

  \end{tikzpicture}
  \caption{\label{fig:query}Illustration of query matrix $\bfQ^i$ for a random decomposition $\FF_{q^s} = V \oplus W$. The region filled uniformly in gray represents elements in $\FF_{q^s}$; the blue hashed region refers to elements in $V$; the red dotted region contains elements in $W$.}

\end{figure}
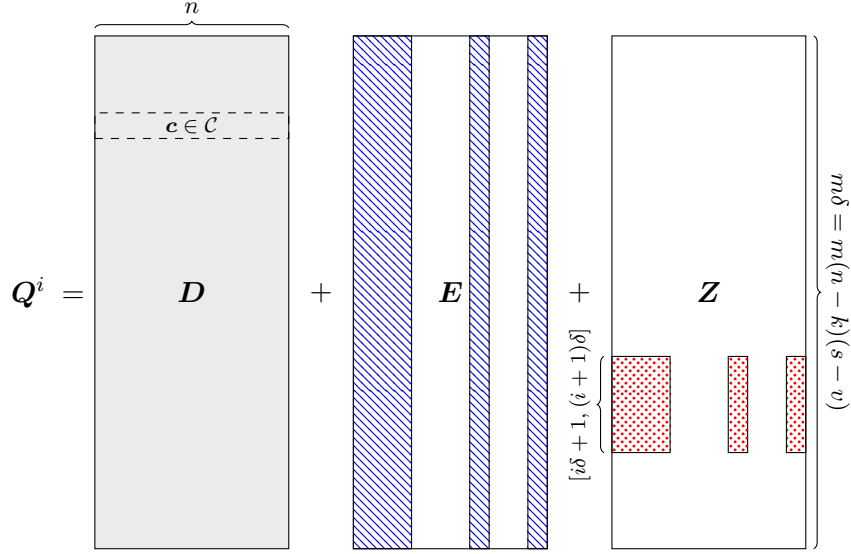

\subsection{Response}

The server computes and sends back the result of the matrix product $\bfA^i = [ \bfX^1, \dots, \bfX^m ] \cdot \bfQ^i \in \FF_{q^s}^{L \times n}$ to the user.

\subsection{Decoding}

Let us decompose the matrix $\bfQ^i \in \FF_{q^s}^{m \delta \times n}$ as a stack of $m$ submatrices $\bfQ^i_1, \dots, \bfQ^i_m \in \FF_{q^s}^{\delta \times n}$. One can proceed similarly for $\bfD$, $\bfE$ and $\bfZ^i$. Then we have:
\[
\bfA^i = \sum_{r=1}^m \bfX^r \cdot \bfQ^i_r = \sum_{r=1}^m \bfX^r \cdot \bfD_r + \sum_{r=1}^m \bfX^r \cdot (\bfE_r +  \bfZ^i_r)\,.
\]
The rows of matrix $\sum_{r=1}^m \bfX^r \cdot \bfD_r$ all lie in $\calC$. By inverting a linear system on the information set $\calI$, the user can thus recover
\[
\bfY = \bfA^i - \sum_{r=1}^m \bfX^r \cdot \bfD_r = \sum_{r=1}^m \bfX^r \cdot (\bfE_r +  \bfZ^i_r) = \left( \sum_{r=1}^m \bfX^r \cdot \bfE_r \right) + \bfX^i \cdot \Delta
\]
where $\Delta \mydef \bfZ^i_{[i\delta+1, (i+1)\delta] \times [1,n]}$.

It remains to notice that, for a given basis $\mathcal{W}$ of $W$, we have $\psi_{\mathcal{W}}(\bfY) = \bfX^i \cdot \Delta$. Since $\rk_{\FF_q}(\Delta) = \delta$, the user can eventually retrieve $\bfX^i$ from $\bfX^i \cdot \Delta$.

\section{An efficient attack based on the $\FF_q$-rank of submatrices}
\label{sec:attack}

\subsection{Presentation of the attack}
\label{subsec:attack}

Informally, the attack relies on the following observation: for a large enough number of files and with high probability, the $\FF_q$-rank of $\bfD + \bfE$ is much lower than the rank of $\bfQ^i$. Hence, if we denote $\bfQ^i[j]$ the submatrix of $\bfQ^i$ obtained after deletion of rows $[j\delta+1, (j+1)\delta]$, then  one can easily distinguish between the two following cases:
\begin{enumerate}
  \item $\bfQ^i[i]$ (in which case the only non-zero component $\Delta$ of $\bfZ^i$ has been removed), and 
  \item $\bfQ^i[j]$ for $j \in [1,m] \setminus \{i\}$ (in which case the component $\Delta$ still remains).
\end{enumerate}

Let us first prove a first result concerning the structure of the matrix $\bfQ^i$ over $\FF_q$.

\begin{proposition}
  \label{prop:decomposition}
  Let $\FF_{q^s} = V \oplus W$, $\calC \subseteq \FF_{q^s}^n$ and $\calI$ be chosen as in Section~\ref{sec:description}. Then, we have the following decomposition of $\FF_{q^s}^n$ into $\FF_q$-linear spaces:
  \[
  \calC \oplus \phi_{\overline{\calI}}(V^{n-k}) \oplus  \phi_{\overline{\calI}}(W^{n-k}) = \FF_{q^s}^n\,.
  \]
  Moreover, any query $\bfQ^i = \bfD + \bfE + \bfZ^i$ satisfies:
  \[
  \langle \bfD \rangle_{\FF_q} \subseteq \calC, \quad\quad \langle \bfE \rangle_{\FF_q} \subseteq \phi_{\overline{\calI}}(V^{n-k})\,,  \quad\quad \text{ and }  \quad\quad \langle \bfZ^i \rangle_{\FF_q} \subseteq \phi_{\overline{\calI}}(W^{n-k})\,.
  \]
\end{proposition}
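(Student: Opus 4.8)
The plan is to treat the two assertions in turn: first establish the $\FF_q$-space decomposition of $\FF_{q^s}^n$, and then read off the three containments essentially directly from the way $\bfD$, $\bfE$ and $\bfZ^i$ are constructed in Section~\ref{sec:description}. For the decomposition I would begin with an $\FF_q$-dimension count. Regarded as an $\FF_q$-linear space, $\calC$ has dimension $ks$ (it has $\FF_{q^s}$-dimension $k$, and $[\FF_{q^s}:\FF_q]=s$). Since $\phi_{\overline{\calI}}$ merely inserts zeros on the coordinates indexed by $\calI$, it is $\FF_q$-linear and injective, so $\dim_{\FF_q}\phi_{\overline{\calI}}(V^{n-k}) = v(n-k)$ and $\dim_{\FF_q}\phi_{\overline{\calI}}(W^{n-k}) = (s-v)(n-k) = \delta$. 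These sum to $ks + v(n-k) + (s-v)(n-k) = ks + s(n-k) = sn = \dim_{\FF_q}\FF_{q^s}^n$, so it suffices to prove that the sum of the three spaces is direct.

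I would verify directness in two steps. First, because $V\oplus W = \FF_{q^s}$ and $\phi_{\overline{\calI}}$ operates coordinate-wise, one has $\phi_{\overline{\calI}}(V^{n-k}) + \phi_{\overline{\calI}}(W^{n-k}) = \phi_{\overline{\calI}}(\FF_{q^s}^{n-k})$, and this sum is direct since a vector lying in both summands is supported on $\overline{\calI}$ with each entry in $V\cap W = \{0\}$. It then remains to show that $\calC \cap \phi_{\overline{\calI}}(\FF_{q^s}^{n-k}) = \{0\}$, after which the dimension count forces $\calC \oplus \phi_{\overline{\calI}}(\FF_{q^s}^{n-k}) = \FF_{q^s}^n$ and hence the claimed threefold direct sum.

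This last point is the crux, and the only place where the hypothesis that $\calI$ is an information set is used. Observe that $\phi_{\overline{\calI}}(\FF_{q^s}^{n-k})$ is exactly the set of $\bfx \in \FF_{q^s}^n$ with $\bfx_\calI = 0$, i.e. the kernel of the coordinate projection $\bfx \mapsto \bfx_\calI$. Since $\calI$ is an information set, this projection maps $\calC$ onto $\calC_\calI = \FF_{q^s}^k$; as $\dim_{\FF_{q^s}}\calC = k$, the map is an $\FF_{q^s}$-isomorphism on $\calC$, so its restriction to $\calC$ has trivial kernel, which is precisely $\calC \cap \phi_{\overline{\calI}}(\FF_{q^s}^{n-k}) = \{0\}$.

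The ``Moreover'' part is then immediate: every row of $\bfD$ is a codeword of $\calC$, so $\langle \bfD \rangle_{\FF_q} \subseteq \calC$; every row of $\bfE$ has all its entries in $V$ and vanishes on the coordinates in $\calI$, hence lies in $\phi_{\overline{\calI}}(V^{n-k})$ (an $\FF_q$-space), so $\langle \bfE \rangle_{\FF_q} \subseteq \phi_{\overline{\calI}}(V^{n-k})$; and likewise every row of $\bfZ^i$ has entries in $W$ and is supported on coordinates in $\overline{\calI}$, giving $\langle \bfZ^i \rangle_{\FF_q} \subseteq \phi_{\overline{\calI}}(W^{n-k})$. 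I do not expect any genuine obstacle here; the one thing to be careful about is keeping the $\FF_q$-linear and $\FF_{q^s}$-linear structures apart — in particular remembering that $\calC$ has $\FF_q$-dimension $ks$, not $k$ — and checking that the three dimensions add up to exactly $sn$, so that the two direct-sum verifications are enough to conclude.
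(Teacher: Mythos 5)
Your proof is correct and follows essentially the same route as the paper: both establish $\calC \oplus \phi_{\overline{\calI}}(\FF_{q^s}^{n-k}) = \FF_{q^s}^n$ from the information-set property and then split the second summand via $\FF_{q^s} = V \oplus W$ coordinate-wise. You simply fill in the details (dimension count, trivial intersections) that the paper leaves implicit.
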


\begin{proof}
  The set $\calI \subset [1,n]$ is an information set for $\calC \subseteq \FF_{q^s}^n$, hence it holds $\calC \oplus \phi_{\overline{\calI}}(\FF_q^{n-k}) = \FF_{q^s}^n$ as $\FF_{q^s}$-linear spaces. This equality holds \emph{a fortiori} as $\FF_q$-linear spaces. We also have $V \oplus W = \FF_{q^s}$, and since $\phi_{\overline{\calI}}$ is $\FF_q$-linear, it follows that $\calC \oplus \phi_{\overline{\calI}}(V^{n-k}) \oplus  \phi_{\overline{\calI}}(W^{n-k}) = \FF_{q^s}^n$.
\end{proof}

One can now notice that $\bfZ^i[i] = \bfzero$, hence $\bfQ^i[i] = \bfD[i] + \bfE[i]$. As a corollary, observe that the rank of $\bfQ^i[i]$ is remarkably low.

\begin{corollary}
  \label{cor:low-rank}Let us denote $k_0 \mydef ks + v(n-k) = sn - \delta$. 
  For every $i \in [1,m]$, we have $\rk_{\FF_q}(\bfQ^i[i]) \le k_0$.
\end{corollary}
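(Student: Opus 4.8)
The plan is to combine the observation already recorded just before the corollary, namely that $\bfZ^i[i] = \bfzero$, with Proposition~\ref{prop:decomposition}. First I would spell out why $\bfZ^i[i] = \bfzero$: by construction the only possibly nonzero rows of $\bfZ^i$ are those indexed by $[i\delta+1,(i+1)\delta]$, and these are precisely the rows deleted to form $\bfQ^i[i]$. Hence $\bfQ^i[i] = \bfD[i] + \bfE[i]$, where $\bfD[i]$ and $\bfE[i]$ denote the matrices obtained from $\bfD$ and $\bfE$ by the same row deletion.

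Next I would pass to row spans over $\FF_q$. Deleting rows can only shrink the $\FF_q$-row span, so Proposition~\ref{prop:decomposition} yields $\langle \bfD[i] \rangle_{\FF_q} \subseteq \langle \bfD \rangle_{\FF_q} \subseteq \calC$ and $\langle \bfE[i] \rangle_{\FF_q} \subseteq \langle \bfE \rangle_{\FF_q} \subseteq \phi_{\overline{\calI}}(V^{n-k})$. Since every row of $\bfQ^i[i] = \bfD[i] + \bfE[i]$ is a sum of a row of $\bfD[i]$ and a row of $\bfE[i]$, it follows that $\langle \bfQ^i[i]\rangle_{\FF_q} \subseteq \calC + \phi_{\overline{\calI}}(V^{n-k})$, and therefore $\rk_{\FF_q}(\bfQ^i[i]) \le \dim_{\FF_q}\bigl(\calC + \phi_{\overline{\calI}}(V^{n-k})\bigr)$.

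Finally I would compute this dimension. Proposition~\ref{prop:decomposition} tells us the sum $\calC \oplus \phi_{\overline{\calI}}(V^{n-k}) \oplus \phi_{\overline{\calI}}(W^{n-k})$ is direct, so in particular $\calC + \phi_{\overline{\calI}}(V^{n-k})$ is direct and its $\FF_q$-dimension is $\dim_{\FF_q}\calC + \dim_{\FF_q}\phi_{\overline{\calI}}(V^{n-k})$. A code of dimension $k$ over $\FF_{q^s}$ has $\FF_q$-dimension $ks$; and since $\phi_{\overline{\calI}}$ is $\FF_q$-linear and injective on the coordinates in $\overline{\calI}$ while $\dim_{\FF_q} V = v$, the space $\phi_{\overline{\calI}}(V^{n-k})$ has $\FF_q$-dimension $v(n-k)$. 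Adding these gives $ks + v(n-k) = k_0$, which is exactly the claimed bound.

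I do not expect any real obstacle: the statement is a direct consequence of the preceding proposition together with the vanishing of $\bfZ^i[i]$. The only point requiring minor care is the dimension bookkeeping — recording that $\dim_{\FF_q}\calC = ks$ and $\dim_{\FF_q}\phi_{\overline{\calI}}(V^{n-k}) = v(n-k)$, and invoking the directness of the sum so that these two dimensions add rather than merely bound the total from above.
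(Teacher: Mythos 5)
Your proof is correct and follows essentially the same route as the paper's: observe that $\bfZ^i[i] = \bfzero$ so $\bfQ^i[i] = \bfD[i] + \bfE[i]$, then bound the $\FF_q$-rank by $\dim_{\FF_q}\calC + \dim_{\FF_q}\phi_{\overline{\calI}}(V^{n-k}) = ks + v(n-k) = k_0$ via Proposition~\ref{prop:decomposition}. The only (harmless) extra step is your appeal to directness of the sum, which is not needed for an upper bound since $\dim_{\FF_q}(A+B) \le \dim_{\FF_q}A + \dim_{\FF_q}B$ holds in general.
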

\begin{proof}
  This is a direct consequence of the fact that $\dim_{\FF_q}(\calC) = ks$ and  $\dim_{\FF_q}(V^{n-k}) = v(n-k)$.
\end{proof}

Let us now characterize the rank of $\bfQ^i[j]$ for $j \in [1,m] \setminus \{ i\}$. Due to Proposition~\ref{prop:decomposition},  we have
\[
\rk_{\FF_q}(\bfQ^i[j]) = \rk_{\FF_q}(\bfD[j] + \bfE[j]) + \rk_{\FF_q}(\bfZ^i[j]) = \rk_{\FF_q}(\bfD[j] + \bfE[j]) + \delta
\]
since matrix $\Delta$ has rank $\delta$ over $\FF_q$, by construction.

Hence, it remains to compute the probability that $\rk_{\FF_q}(\bfD[j] + \bfE[j])$ does not shrink too much to enable an attacker to distinguish between $\rk_{\FF_q}(\bfQ^i[j])$ and $\rk_{\FF_q}(\bfQ^i[i])$.

For $a \le b$, let us denote
\[
\bigqbin{b}{a}{q} \mydef \frac{(q^b-1)(q^b-q) \cdots (q^b-q^{a-1})}{(q^a-1)(q^a-q) \cdots (q^a-q^{a-1})}
\]
the Gaussian, or $q$-binomial, coefficient which counts the number of $\FF_q$-linear spaces of dimension $a$ contained in a fixed $b$-dimensional linear space over $\FF_q$. 

\begin{proposition}
  \label{prop:proba-low-rank}
  Let $\bfQ^i = \bfD + \bfE + \bfZ^i$ be a query generated as in Section~\ref{sec:description}. Let also $j \ne i$ and $k_0 = sn - \delta$. Then we have:
  \[
  \Pr\Big(\rk_{\FF_q}(\bfD[j] + \bfE[j]) \le k_0 - \delta \Big) \le \bigqbin{k_0}{k_0-\delta}{q} \cdot q^{-\delta^2(m-1)}\,,
    \]
    where the probability is taken over the randomness of the query generation.
\end{proposition}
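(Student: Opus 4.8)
The plan is to condition on the random choice of $\calC$, $\calI$, $V$ and $W$, to show that $\bfD[j] + \bfE[j]$ then has the same $\FF_q$-rank as a matrix drawn uniformly at random from $\FF_q^{(m-1)\delta \times k_0}$, and finally to bound the probability that such a uniform matrix is rank-deficient by $\delta$ or more.

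\textbf{Step 1: reduction to a uniform random matrix over $\FF_q$.} I would set $\calU \mydef \calC \oplus \phi_{\overline{\calI}}(V^{n-k})$, which by Proposition~\ref{prop:decomposition} is an $\FF_q$-subspace of $\FF_{q^s}^n$ of dimension $ks + v(n-k) = k_0$ containing every row of $\bfD + \bfE$; in the direct-sum decomposition of $\calU$, the $\calC$-component of the $r$-th row of $\bfD + \bfE$ equals the $r$-th row $\bfd_r$ of $\bfD$, and its $\phi_{\overline{\calI}}(V^{n-k})$-component equals the $r$-th row $\bfe_r$ of $\bfE$. Choosing an $\FF_q$-basis of $\calU$ obtained by concatenating an $\FF_q$-basis of $\calC$ (first $ks$ coordinates) with one of $\phi_{\overline{\calI}}(V^{n-k})$ (last $v(n-k)$ coordinates), the coordinate map $\calU \to \FF_q^{k_0}$ is an $\FF_q$-linear isomorphism, so reading $\bfD[j] + \bfE[j]$ through it row-wise produces a matrix $\bfM \in \FF_q^{(m-1)\delta \times k_0}$ with $\rk_{\FF_q}(\bfD[j] + \bfE[j]) = \rk_{\FF_q}(\bfM)$. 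Now, conditionally on $\calC$, $\calI$, $V$, $W$: the rows of $\bfD$ are independent and uniform in $\calC$, the entries of $\bfE$ in the columns indexed by $\overline{\calI}$ are independent and uniform in $V$, and $\bfD$, $\bfE$ are sampled independently; hence each row of $\bfM$ has its first $ks$ coordinates uniform in $\FF_q^{ks}$ and its remaining $v(n-k)$ coordinates uniform in $\FF_q^{v(n-k)}$, all of these being jointly independent. So $\bfM$ is uniformly distributed over $\FF_q^{(m-1)\delta \times k_0}$, conditionally on $\calC$, $\calI$, $V$, $W$; as the bound obtained below does not depend on these objects, it holds unconditionally by averaging.

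\textbf{Step 2: corank of a uniform random matrix.} Viewing $\bfM$ as the matrix of an $\FF_q$-linear map $\FF_q^{k_0} \to \FF_q^{(m-1)\delta}$, one has $\rk_{\FF_q}(\bfM) \le k_0 - \delta$ precisely when $\dim_{\FF_q}(\ker \bfM) \ge \delta$, i.e.\ precisely when $\ker \bfM$ contains at least one $\delta$-dimensional $\FF_q$-subspace $U$ of $\FF_q^{k_0}$. I would then apply a union bound over all such subspaces,
\[
  \Pr\big(\rk_{\FF_q}(\bfM) \le k_0 - \delta\big) \;\le\; \sum_{\substack{U \subseteq \FF_q^{k_0} \\ \dim_{\FF_q} U = \delta}} \Pr\big(U \subseteq \ker \bfM\big),
\]
the number of terms being $\bigqbin{k_0}{\delta}{q} = \bigqbin{k_0}{k_0-\delta}{q}$. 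For a fixed $U$, choosing $\bfS \in \FF_q^{k_0 \times \delta}$ whose columns form a basis of $U$ (so $\bfS$ has full column rank), the condition $U \subseteq \ker \bfM$ is equivalent to $\bfM \bfS = \bfzero$; since $\bfS$ has full column rank, the $\FF_q$-linear map $\bfM \mapsto \bfM\bfS$ from $\FF_q^{(m-1)\delta \times k_0}$ onto $\FF_q^{(m-1)\delta \times \delta}$ is surjective and hence sends the uniform distribution to the uniform distribution, giving $\Pr(\bfM\bfS = \bfzero) = q^{-(m-1)\delta \cdot \delta} = q^{-\delta^2(m-1)}$. Plugging this into the union bound yields the claimed inequality.

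\textbf{Expected main obstacle.} Step 2 is the classical union-bound estimate for the corank of a uniform random matrix over a finite field, so I expect the only delicate point to be Step 1: one must verify carefully that, after passing to the basis adapted to the direct-sum decomposition of Proposition~\ref{prop:decomposition}, the independent uniform choices of $\bfD$ (uniform codewords of $\calC$) and of $\bfE$ (uniform entries of $V$ off the information set) really do turn $\bfD[j] + \bfE[j]$ into a matrix uniformly distributed over $\FF_q^{(m-1)\delta \times k_0}$.
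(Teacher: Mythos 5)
Your proposal is correct and follows essentially the same route as the paper: reduce to the observation that the rows of $\bfD[j]+\bfE[j]$ are independent and uniform in the $k_0$-dimensional $\FF_q$-space $\calU = \calC \oplus \phi_{\overline{\calI}}(V^{n-k})$, then apply a union bound over subspaces counted by $\bigqbin{k_0}{k_0-\delta}{q}$ with per-subspace probability $q^{-\delta^2(m-1)}$. The only (immaterial) difference is that you union over $\delta$-dimensional subspaces of the kernel while the paper unions over $(k_0-\delta)$-dimensional subspaces containing the row space; your Step~1, which makes the uniformity of the coordinate matrix $\bfM$ explicit, is in fact spelled out more carefully than the paper's one-line assertion.
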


\begin{proof}
  Let us denote $\calU \mydef \calC \oplus \phi_{\overline{I}}(V^{n-k})$ and recall that $\calU$ is a $\FF_q$-linear space of dimension $k_0$. During the generation of the query $\bfQ^i$, each row of $\bfD + \bfE$ is actually a vector from $\calU$ picked uniformly at random. Hence, the probability we aim at bounding is exactly
  \[
  p \mydef \Pr\Big( \exists \calA \subset \calU, \dim_{\FF_q}(\calA) = k_0 - \delta \mid \forall \bfy \in \textsf{Rows}(\bfD[j] + \bfE[j]), \bfy \in \calA \Big)\,,
  \]
  where $\textsf{Rows}(\bfD[j] + \bfE[j])$ represents the set of rows of $\bfD[j] + \bfE[j]$, seen as vectors of length $ns$ over $\FF_q$. Let us denote ${\rm Gr}_{\calU}(k_0 - \delta)$ the set of subspaces of dimension $k_0 - \delta$ included in $\calU$. By union bound we get
  \[
  p \le \sum_{\calA \in {\rm Gr}_{\calU}(k_0 - \delta)}  \Pr\Big( \forall \bfy \in \textsf{Rows}(\bfD[j] + \bfE[j]), \bfy \in \calA \Big)\,.
  \]
  Rows $\bfy \in \textsf{Rows}(\bfD[j] + \bfE[j])$ are vectors picked uniformly and independently in $\calU$. Thus, this yields
  \[
  p \le \sum_{\calA \in {\rm Gr}_{\calU}(k_0 - \delta)} \left( \prod_{t=1}^{(m-1)\delta} \Pr( \bfy \in \calA \mid \bfy \leftarrow \calU) \right)
  \]
  Note that ${\rm Gr}_{\calU}(k_0 - \delta)$ has cardinality $\qbin{k_0}{k_0-\delta}{q}$. Hence, 
  \[
  p \le \bigqbin{k_0}{k_0-\delta}{q} \cdot \prod_{t=1}^{(m-1)\delta} q^{-\delta} = \bigqbin{k_0}{k_0-\delta}{q} \cdot q^{-\delta^2(m-1)}\,.
  \]
\end{proof}

Notice that a rough upper bound for the Gaussian coefficient $\qbin{k_0}{k_0-\delta}{q}$ is $q^{(\delta+1)(k_0-\delta)}$. Hence, the upper bound given in Proposition~\ref{prop:proba-low-rank} is meaningful as soon as $(\delta+1)(k_0-\delta) \le \delta^2(m-1)$. Thus, let us define
\[
m_0 \mydef 1 + \left\lceil \frac{(\delta+1)(k_0-\delta)}{\delta^2} \right\rceil = 1 + \left\lceil \Big(1 + \frac{1}{\delta}\Big)\Big(\frac{sn}{\delta}-2\Big) \right\rceil\,.
\]

We can now state the main result of the paper. The polynomial time algorithm we propose as an attack to the cPIR scheme is given as a proof of our main theorem.

\begin{theorem}
  \label{thm:main-thm}
  Let $\bfQ^i = \bfD + \bfE + \bfZ^i \in \FF_{q^s}^{m \delta \times n}$ be a query generated as in Section~\ref{sec:description}, and assume that $m \ge m_0 = 1 + \lceil \frac{(\delta+1)(k_0-\delta)}{\delta^2}\rceil$. There exists an algorithm running in $\calO(m^2(sn)^3)$ operations over $\FF_q$, which recovers the index $i$ when given as input $\bfQ^i$ with probability at least
  \[
  1 - q^{-(m-m_0)\delta^2},
  \]
  where the probability is taken over the randomness of the query generation.
\end{theorem}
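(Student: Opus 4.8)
The plan is to turn the rank dichotomy established in Corollary~\ref{cor:low-rank} and Proposition~\ref{prop:proba-low-rank} into a concrete distinguishing algorithm. The algorithm, on input $\bfQ^i$, computes for each $j \in [1,m]$ the quantity $r_j \mydef \rk_{\FF_q}(\bfQ^i[j])$, that is, the $\FF_q$-dimension of the row span of the matrix obtained by deleting rows $[j\delta+1,(j+1)\delta]$, viewing each entry of $\FF_{q^s}$ as a length-$s$ vector over $\FF_q$. It then outputs the index $j$ minimizing $r_j$ (breaking ties arbitrarily). First I would record the two facts we already have: by Corollary~\ref{cor:low-rank}, $r_i \le k_0$; and by the identity preceding Proposition~\ref{prop:proba-low-rank}, for $j \ne i$ we have $r_j = \rk_{\FF_q}(\bfD[j]+\bfE[j]) + \delta$. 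So the algorithm succeeds as long as $r_j > k_0$ for \emph{every} $j \ne i$, equivalently $\rk_{\FF_q}(\bfD[j]+\bfE[j]) > k_0 - \delta$ for every $j \ne i$.

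Next I would bound the failure probability by a union bound over the (at most $m-1$) indices $j \ne i$, applying Proposition~\ref{prop:proba-low-rank} to each:
\[
\Pr(\text{failure}) \le \sum_{j \ne i} \Pr\big(\rk_{\FF_q}(\bfD[j]+\bfE[j]) \le k_0 - \delta\big) \le (m-1)\bigqbin{k_0}{k_0-\delta}{q} q^{-\delta^2(m-1)}.
\]
Then I would plug in the crude bound $\qbin{k_0}{k_0-\delta}{q} \le q^{(\delta+1)(k_0-\delta)}$ and absorb the factor $m-1 \le q^{\delta^2}$ crudely, or more carefully note $\qbin{k_0}{k_0-\delta}{q}\le q^{(\delta+1)(k_0-\delta)} \le q^{\delta^2(m_0-1)}$ by definition of $m_0$, so that the whole bound collapses to $q^{-\delta^2(m-1)+\delta^2(m_0-1)} = q^{-\delta^2(m-m_0)}$, after checking that the $(m-1)$ factor is swallowed (one can instead define $m_0$ with a little slack, or observe $m-1 < q^{\delta^2}$ whenever $\delta \ge 1$, pushing a $-1$ into the exponent, which is already implied by the ceiling in $m_0$). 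This gives the claimed $1 - q^{-(m-m_0)\delta^2}$.

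Finally I would account for the running time. Computing each $r_j$ is a Gaussian elimination on an $(m-1)\delta \times ns$ matrix over $\FF_q$, costing $\calO((m\delta)(ns)\min\{m\delta,ns\}) = \calO(m^2\delta^2 ns)$ or more crudely $\calO(m (sn)^3)$ operations since $m\delta \le m\cdot sn$; doing this for all $m$ choices of $j$ gives $\calO(m^2 (sn)^3)$. The main subtlety — and the one place the argument needs care rather than a one-line citation — is that Proposition~\ref{prop:proba-low-rank} as stated bounds the probability that the rank is \emph{at most} $k_0-\delta$, whereas success requires it to be \emph{strictly greater} than $k_0-\delta$; these are complementary, so the union bound applies directly, but I would state this explicitly. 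A secondary point worth a sentence is why the rows of $\bfD[j]+\bfE[j]$ are genuinely uniform and independent in $\calU = \calC\oplus\phi_{\overline{\calI}}(V^{n-k})$ (inherited from the query-generation distribution, already used in the proof of Proposition~\ref{prop:proba-low-rank}), and why the decomposition of Proposition~\ref{prop:decomposition} lets us split $\rk_{\FF_q}(\bfQ^i[j])$ additively across the three blocks — this is the only structural input, and everything else is routine.
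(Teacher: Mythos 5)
Your proposal is correct and follows essentially the same route as the paper: the same rank-threshold algorithm (output the index whose deletion drops $\rk_{\FF_q}(\bfQ^i[j])$ to at most $k_0$), the same use of Corollary~\ref{cor:low-rank} and Proposition~\ref{prop:proba-low-rank}, the same bound $\qbin{k_0}{k_0-\delta}{q} \le q^{(\delta+1)(k_0-\delta)}$ absorbed via the definition of $m_0$, and the same cost analysis. If anything you are slightly more careful than the paper, which silently drops the factor $m-1$ arising from the union bound over $j \ne i$ that you explicitly track and discuss absorbing.
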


\begin{proof}
  The algorithm consists in the following. Given the query $\bfQ^i$, first compute the $\FF_q$-rank of submatrices $\bfQ^i[j] \in \FF_{q^s}^{(m-1)\delta \times n}$ for every $j \in [1,m]$. Then, output the index $j^* \in [1,m]$ (if unique) such that $\rk_{\FF_q}(\bfQ^i[j^*]) \le k_0$.  Notice that the $\FF_q$-rank of matrices can be computed with any basis of $\FF_{q^m}/\FF_q$, and thus, independently of the knowledge of the basis $\{\gamma_1, \dots, \gamma_s\}$ chosen by the user. 

  From Corollary~\ref{cor:low-rank}, we indeed have $\rk_{\FF_q}(\bfQ^i[i]) \le k_0$. Moreover, from Proposition~\ref{prop:proba-low-rank} and the discussion above, the probability that $\rk_{\FF_q}(\bfQ^i[j]) \le k_0$ for some $j \ne i$ is upper bounded by
  \[
  \bigqbin{k_0}{k_0-\delta}{q} \cdot q^{-\delta^2(m-1)} \le q^{-(m-m_0)\delta^2 + (\delta+1)(sn-2\delta)- (m_0-1)\delta^2} \le  q^{-(m-m_0)\delta^2},
  \]
  by definition of $m_0$.

  The running time of the algorithm is in $\calO(m^2(sn)^3)$ since it consists in computing $m$ times the $\FF_q$-rank of a matrix of size $(m-1)\delta \times n$ over $\FF_{q^s}$, where $\delta \le sn$.
\end{proof}

\subsection{Discussion}

%
%

In this paragraph, we discuss the necessary condition $m \ge m_0$ for the attack to work. We show that this condition is fulfilled for any relevant parameter of the system. More specifically, Theorem~\ref{thm:main-thm} proposes an efficient attack against the cPIR system from~\cite{HolzbaurHW20} if the number $m$ of files stored by the server is at least $m_0 = 1 + \lceil (1 + \frac{1}{\delta})(\frac{sn}{\delta} - 2)\rceil$. In particular, this cPIR system cannot support an unbounded number of files.

Let us discuss the efficiency of the PIR scheme in the converse case. Recall that the PIR rate of a PIR system is the ratio between the bit size of the desired file and the total communication complexity. The PIR rate corresponding to the trivial PIR protocol is equal to $1/m$. A common assumption for cPIR schemes is to treat the query size as negligible compared to the size of the stored files. For the considered cPIR scheme, this boils down to assuming that $L \gg sn$. The PIR rate may then be approximated as the ratio of the size of a file over the number of bits downloaded in the retrieval protocol. In any case, the bandwidth required for the download of the entire database gives an upper bound on the cost one is willing to afford in terms of communication. However, if $m < m_0$, it turns out that the PIR rate of that cPIR system drops close to the rate of the trivial PIR protocol, as we show next.

In~\cite{HolzbaurHW20}, the authors prove that for large files, \emph{i.e.} $L \gg \delta m$, we have 
\[
R_{\rm PIR} \simeq \frac{\delta}{sn}\,.
\]

Hence, if $m < m_0$, we get:
\[
m - 1 \; \lesssim \; \left(1 + \frac{1}{\delta}\right)\left(\frac{1}{R_{\rm PIR}} - 2\right)\,,
\]
and it follows that
\[
R_{\rm PIR} \; \lesssim \; \frac{1 + \frac{1}{\delta}}{m + 1 + \frac{2}{\delta}} \,.
\]
Thus, for $m<m_0$ the PIR rate is bounded by $\frac{2}{m+3}$. Moreover, for large values of $\delta$, one gets $R_{\rm PIR} = \calO\big(\frac{1}{m}(1 + \frac{1}{\delta})\big)$, \emph{i.e.} in this context the protocol is not significantly better than the trivial entire download of the database.

\section*{Acknowledgments}
The first author benefits from the support of the Chair \enquote{Blockchain \& B2B Platforms}, led by l'X -- {\'E}cole Polytechnique and the Fondation de l'{\'E}cole Polytechnique, sponsored by Capgemini. The second author is funded by French \emph{Direction G{\'e}n{\'e}rale l'Armement}, through the \emph{P{\^o}le d'excellence cyber}.

\bibliographystyle{alpha}
\bibliography{biblio}
\end{document}